\begin{document}

\title{$D$-optimal saturated designs: a simulation study}
\author{Roberto Fontana, Fabio Rapallo and Maria Piera Rogantin}
\institute{Roberto Fontana \at Department DISMA, Politecnico di Torino, Corso Duca degli Abruzzi 24, 10127 TORINO, Italy, \email{roberto.fontana@polito.it}
\and Fabio Rapallo \at Department DISIT, Universit\`a del Piemonte Orientale, Viale Teresa Michel 11, 15121 ALESSANDRIA, Italy, \email{fabio.rapallo@unipmn.it} \and Maria Piera Rogantin \at Department DIMA, Universit\`a di Genova, Via Dodecaneso 35, 16146 GENOVA, Italy, \email{rogantin@dima.unige.it}}
%
%
\maketitle

\abstract*{In this work we focus on saturated $D$-optimal designs.
Using a recent result in \cite{jspi2013}, we identify $D$-optimal
designs with the solutions of an optimization problem with linear
constraints where the objective function to be maximized is the
determinant of the information matrix. We study the possibility to
replace the determinant of the information matrix with simpler
objective functions that could give the same optimal solutions.
These new objective functions are based on the geometric structure
of the design. We perform a simulation study. In all the test
cases we observe that designs with high values of $D$-efficiency
have also high values of the new objective functions.}

\abstract{In this work we focus on saturated $D$-optimal designs.
Using  recent results, we identify $D$-optimal
designs with the solutions of an optimization problem with linear
constraints. We introduce new objective functions based on the geometric structure
of the design and we compare them  with the classical  $D$-efficiency criterion.
 We perform a simulation study. In all the test
cases we observe that designs with high values of $D$-efficiency
have also high values of the new objective functions.}

\section{Introduction}

The optimality of an experimental design depends on the statistical model that
is assumed and is assessed with respect to a statistical
criterion. Among the different criteria, in this chapter we focus
on $D$-optimality.

Widely used statistical systems like {\tt SAS} and {\tt R} have
procedures for finding an optimal design according to the user's
specifications. {\tt Proc Optex} of {\tt SAS/QC} \cite{man3}
searches for optimal experimental designs in the following way.
The user specifies an efficiency criterion, a set of candidate
design points, a model and the size of the design to be found, and
the procedure generates a subset of the candidate set so that the
terms in the model can be estimated as efficiently as possible.

There are several algorithms for searching for $D$-optimal
designs. They have a common structure. Indeed, they start from an
initial design, randomly generated or user specified, and move, in
a finite number of steps, to a better design. All of the search
algorithms are based on adding  points to the growing
design and deleting points from a design that is too big.  Main references to optimal
designs include \cite{atkinson2007optimum},
\cite{goos2011optimal}, \cite{pukelsheim2006optimal},
\cite{rasch2011optimal}, \cite{shah1989theory} and
\cite{wynn1970sequential}.

In this work, we perform a simulation study to analyze a different
approach for describing $D$-optimal designs in the case of
saturated fractions. Saturated fractions, or saturated designs, contain a number of points
that is equal to the number of estimable parameters of the model. It follows
that saturated designs are often used in place of standard
designs, such as orthogonal fractional factorial designs, when the
cost of each experimental run is high. We show how the geometric
structure of a fraction is in relation with its $D$-optimality,
using a recent result in \cite{jspi2013} that allows us to
identify saturated designs with the points with coordinates in
$\{0,1\}$ of a polytope, being the polytope described by a system
of linear inequalities. The linear programming problem is based on
a combinatorial object, namely the circuit basis of the model
matrix. Since the circuits yield a geometric characterization of
saturated fractions, we investigate here the connections between
the classical $D$-optimality criterion and the position of the
design points with respect to the circuits.

In this way the search for $D$-optimal designs can be stated as an
optimization problem where the constraints are a system of linear
inequalities.  Within the classical framework the
objective function to be maximized is the determinant of the
information matrix. In our simulations, we define new objective
functions, which take into account the geometric structure of the
design points with respect to the circuits of the relevant design
matrix. We study the behavior of such objective functions and we
compare them with the classical $D$-efficiency criterion.

The chapter is organized as follows. In Sect.~\ref{sec:1} we briefly
describe the results of \cite{jspi2013} and in particular how
saturated designs can be identified with $\{0,1\}$ points that
satisfy a system of linear inequalities. Then in Sect.~\ref{sec:2}
we present the results of a simulation study in which, using some
test cases, we experiment different objective functions and we
analyze their relationship with the $D$-optimal criterion.
Concluding remarks are made in Sect.~\ref{sec:conclusion}.

\section{Circuits and saturated designs}
\label{sec:1}

As described in \cite{jspi2013}, the key ingredient to
characterize the saturated fractions of a factorial design is its
circuit basis. We recall here only the basic notions about
circuits in order to introduce our theory. For a survey on
circuits and its connections with Statistics, the reader can refer
to \cite{ohsugi:12}.

Given a model matrix $X$ of a full factorial design ${\mathcal
D}$, an integer vector $f$ is in the kernel of $X^t$ if and only
if $X^t f=0$. We denote by $A$ the transpose of $X$. Moreover,
we denote by ${\mathrm{supp}}(f)$ the support of the integer
vector $f$, i.e., the set of indices $j$ such that $f_j \ne 0$.
Finally, the indicator vector of $f$ is the binary vector $(f_j
\ne 0)$, where $( \cdot )$ is the indicator function. An integer
vector $f$ is a circuit of $A$ if and only if:
\begin{enumerate}
\item $f \in \ker(A)$;
\item there is no other integer vector $g \in {\ker(A)}$ such that
${\rm supp}(g) \subset {\rm supp}(f)$ and ${\rm supp}(g) \ne {\rm
supp}(f)$.
\end{enumerate}

The set of all circuits of $A$ is denoted by ${\mathcal C}_A$, and
is named as the circuit basis of $A$. It is known that ${\mathcal
C}_A$ is always finite. The set ${\mathcal C}_A$ can be computed
through specific software. In our examples, we have used {\tt
4ti2} \cite{4ti2}.

Given a model matrix $X$ on a full factorial design ${\mathcal D}$
with $K$ design points and $p$ degrees of freedom, we recall that a
fraction ${\mathcal F} \subset {\mathcal D}$ with $p$ design
points is saturated if $\det(X_{\mathcal F}) \ne 0$, where
$X_{\mathcal F}$ is the restriction of $X$ to the design points in
${\mathcal F}$. With a slight abuse of notation, ${\mathcal F}$
denotes both a fraction and its support. Under these assumptions,
the relations between saturated fractions and the circuit basis
${\mathcal C}_A = \{f_1, \ldots, f_L\}$ associated to $A$ is
illustrated in the theorem below, proved in \cite{jspi2013}.

\begin{theorem} \label{mainthm}
${\mathcal F}$ is a saturated fraction if and only if it does not
contain any of the supports $\{\mathrm{supp}(f_1), \ldots,
\mathrm{supp}(f_L)\}$ of the circuits of $A=X^t$.
\end{theorem}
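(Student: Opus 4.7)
The plan is to recast saturation as linear independence of a set of columns of $A$, and then use the very definition of circuit as a minimal-support element of $\ker(A)$. Since $\mathcal{F}$ contains exactly $p$ points and $X$ has $p$ columns, $X_\mathcal{F}$ is a square $p\times p$ matrix, so $\mathcal{F}$ is saturated if and only if the $p$ columns of $A=X^t$ indexed by $\mathcal{F}$ are linearly independent over $\mathbb{Q}$. The easy direction is then immediate: if $\mathrm{supp}(f_\ell)\subseteq\mathcal{F}$ for some $\ell$, then $Af_\ell=0$ together with $\mathrm{supp}(f_\ell)\subseteq\mathcal{F}$ exhibits a nontrivial linear dependence among the columns of $A_\mathcal{F}$, so $\det(X_\mathcal{F})=0$ and $\mathcal{F}$ is not saturated.

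For the converse I would argue by contrapositive. Assume $\mathcal{F}$ is not saturated; then the columns of $A$ indexed by $\mathcal{F}$ admit a nontrivial linear relation, yielding a nonzero rational $v\in\ker(A)$ with $\mathrm{supp}(v)\subseteq\mathcal{F}$. Among all such $v$ I would pick one of minimum support size, and then clear denominators to obtain an integer vector with the same support. By minimality, no $g\in\ker(A)$ satisfies $\mathrm{supp}(g)\subsetneq\mathrm{supp}(v)$, so $v$ meets conditions (1) and (2) in the definition of a circuit. Hence $\mathrm{supp}(v)=\mathrm{supp}(f_\ell)$ for some $\ell\in\{1,\ldots,L\}$, and $\mathcal{F}$ contains that circuit support, contradicting the hypothesis.

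The one step that deserves a word of justification is the passage from a minimum-support rational kernel vector to the support of an element of $\mathcal{C}_A$. This relies on the observation that the minimum supports over rational and integer kernel vectors coincide: any rational dependence in $\ker(A)$ can be scaled by a common denominator to an integer one without changing its support, while integer kernel vectors are a fortiori rational. So a rationally minimal-support $v$ is also minimal among integer kernel vectors, i.e., a circuit in the sense of the definition, and its support appears in $\{\mathrm{supp}(f_1),\ldots,\mathrm{supp}(f_L)\}$. With that granted the argument is essentially definitional and the theorem follows.
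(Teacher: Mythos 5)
Your proof is correct and is essentially the argument the paper relies on (the paper itself defers the proof of Theorem~\ref{mainthm} to the cited reference): saturation of a $p$-point fraction is read as linear independence of the $p$ columns of $A=X^t$ indexed by ${\mathcal F}$, one direction is immediate from $Af_\ell=0$, and for the converse a nonzero kernel vector supported in ${\mathcal F}$ of minimum support cardinality is, after clearing denominators, a circuit, so its support lies in $\{\mathrm{supp}(f_1),\ldots,\mathrm{supp}(f_L)\}$ and is contained in ${\mathcal F}$. The only point worth making explicit is that the minimality condition in the definition of a circuit is understood for \emph{nonzero} integer vectors $g$ (otherwise the zero vector would trivially violate it), and your choice of a nonzero $v$ of minimum support size handles this automatically, since any nonzero $g\in\ker(A)$ with $\mathrm{supp}(g)\subsetneq\mathrm{supp}(v)\subseteq{\mathcal F}$ would contradict minimality.
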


\section{Simulation study}
\label{sec:2}

The theory described in Sect.~\ref{sec:1} allows us to identify
saturated designs with the feasible solutions of an integer linear
programming problem. Let $C_A=(c_{ij}, i=1,\ldots,L,
j=1,\ldots,K)$ be the matrix, whose rows contain the values of the
indicator functions of the circuits $f_1, \ldots, f_L$,
$c_{ij}=(f_{ij} \ne 0) , i=1,\ldots,L, j=1,\ldots,K$ and
$Y=(y_1,\ldots,y_K)$ be the $K$-dimensional column vector that
contains the unknown values of the indicator function of the
points of ${\mathcal F}$. In our problem the vector $Y$ must satisfy the
following conditions:
\begin{enumerate}
\item the number of points in the fractions must be equal to $p$;
\item the support of the fraction must not contain any of the
supports of the circuits.
\end{enumerate}
In formulae, this fact translates into the following constraints:
\begin{equation} \label{eq:xspp1}
{1}_K^t Y =p ,
\end{equation}
\begin{equation} \label{eq:xspp2}
C_A Y < b,
\end{equation}
\begin{equation} \label{eq:xspp3}
y_i \in \{0,1\}
\end{equation}
where $b=(b_1, \ldots, b_L)$ is the column vector defined by
$b_i=\#{\rm supp}(f_i), i=1,\ldots , L$, and ${1}_K$ is the
column vector of length $K$ and whose entries are all equal to $1$.

Since $D_Y= \det(V(Y)) = \det( X^t_{\mathcal F} X_{\mathcal F})$
is an objective function, it follows that a $D$-optimal design is
the solution of the optimization problem
\begin{eqnarray*}
\text{ maximize } \det(V(Y))\\
\text{subject to } (\ref{eq:xspp1}), (\ref{eq:xspp2}) \text{ and }  (\ref{eq:xspp3}).
\end{eqnarray*}

In general the objective function to be maximized $\det(V(Y))$ has several local optima and the problem
of finding the global optimum is part of current research, \cite{fontana_dopt}.
Instead of trying to solve this optimization problem in this work we prefer to study different
objective functions that are simpler than the original one
but that could generate the same optimal solutions. By analogy of
Theorem \ref{mainthm}, our new objective functions are defined
using the circuits of the model matrix.

For any $Y$, we define the vector $b_Y=C_A Y$. This vector $b_Y$
contains the number of points that are in the intersection between
the fraction $\mathcal{F}$ identified by $Y$ and the support of
each circuit $f_i \in {\mathcal C}_A, i=1,\ldots,L$. From
(\ref{eq:xspp2}) we know that each of these intersections must be
strictly contained in the support of each circuit. For each
circuit $f_i, i=1,\ldots,L$ it seems natural to minimize the
cardinality  $(b_Y)_i$ of the intersection between its support ${\rm
supp}(f_i)$ and Y with respect to the size of its
support, $b_i$. Therefore, we considered the following two
objective functions:
\begin{itemize}
\item $g_1(Y) = \sum_{i=1}^L (b-b_Y)_i$;

\item $g_2(Y) = \sum_{i=1}^L (b-b_Y)_i^2$.
\end{itemize}

From the examples analyzed in Sect.~\ref{sec:2.1}, we observe that
the $D$-optimality is reached with fractions that contain part of
the largest supports of the circuits, although this fact seems to disagree with Thm.~\ref{mainthm}. In fact, Thm.~\ref{mainthm} states that fractions containing the support of a circuit are not saturated, and therefore one would expect that optimal fractions will have intersections as small as possible with the supports of the circuits. On the other hand, our experiments show that optimality is reached with fractions having intersections as large as possible with such supports. For this
reason we consider also the following objective function:
\begin{itemize}
\item $g_3(Y) = \max(b_Y)$.
\end{itemize}

As a measure of $D$-optimality we use the $D$-efficiency,
\cite{man3}. The  $D$-efficiency of a fraction ${\mathcal F}$ with
indicator vector $Y$ is defined as
\[
E_{Y} = \left( \frac{1}{{\#{\mathcal F}}}
D_{Y}^{\frac{1}{{\#{\mathcal F}}}} \right) \times 100
\]
where $\#{\mathcal F}$ is the number of points of ${\mathcal F}$
that is equal to $p$ in our case, since we consider only saturated
designs.

\subsection{First case. $2^4$ with main effects and $2$-way interactions}
\label{sec:2.1}

Let us consider the $2^4$ design and the model with main factors
and 2-way interactions. The design matrix $X$ of the full design
has $16$ rows and $11$ columns, the number of estimable
parameters. As the matrix $X$ has rank $11$, we search for
fractions with $11$ points. A direct computation shows that there
are $\binom{16}{11}=4,368$ fractions with $11$ points: among them
$3,008$ are saturated, and the remaining $1,360$ are not.  Notice that equivalences up to permutations of factor or levels are not considered here.

The circuits are $140$ and the  cardinalities of their supports are $8$ in $20$ cases, $10$ in $40$ cases, $12$ in $80$ cases.
 For more details refer to \cite{jspi2013}.
This example is small enough for a complete enumeration of all
saturated fractions. Moreover, the structure of that fractions
reduces to few cases, due to the symmetry of the problem.

For each saturated fraction $\mathcal{F}$ with indicator vector
$Y$ we compute the vector $b_Y$, whose components are the size of
the intersection between the fraction and the support of all the
circuits, $\mathcal{F}\cap {\rm supp}(f_i), i=1,\ldots, 140$, and
we consider $b-b_Y$. Recall that $b$ is the vector of the
cardinalities of the circuits. The frequency table of $b-b_Y$
describes how many points need to be added to a fraction in order
to complete each circuit. All the frequency tables are displayed
in the left side of Table \ref{piccola}, while on the right side
we report the corresponding values of $D$-efficiency.

\begin{table}  \caption{Frequency tables of $b-b_Y$ for the $2^4$ design with main effects
and $2$-way interactions.}
\label{piccola}       
%
%
\begin{tabular}{p{1cm}p{1cm}p{1cm}p{1cm}p{1cm}p{1cm}p{1cm}p{1cm}p{1cm}}
\hline\noalign{\smallskip}
 \multicolumn{5}{c}{table$(b-b_Y)$} & & \multicolumn{3}{c}{$E_Y$}  \\
\noalign{\smallskip}\svhline\noalign{\smallskip}
$1$ & $2$  & $3$ &  $4$ & $5$ & & $68.29$ &  $77.46$ & $83.38$  \\
\noalign{\smallskip}\svhline\noalign{\smallskip}
  5 & 15 & 50 & 60 & 10 & & 192  &    0 &    0 \\
  5 & 18 & 48 & 55 & 14 & & 1,040 &     0 &   0 \\
  5 & 21 & 46 & 50 & 18 & & 960  &  0  &  0 \\
  5 & 24 & 44 & 45 & 22 & & 480  &  0 &   0 \\
  5 & 27 & 42 & 40 & 26 & & 0  & 320 &   0 \\
  5 & 30 & 40 & 35 & 30 & & 0  &  0 &  16  \\
\noalign{\smallskip}\hline\noalign{\smallskip}
  & & & & & {Total} &  2,672 & 320 & 16 \\
\noalign{\smallskip}\hline\noalign{\smallskip}
\end{tabular}
\end{table}

For instance, consider one of the $192$ fractions in the first
row. Among the $140$ circuits, $5$ of them are completed by adding
$1$ point to the fraction, $15$ of them by adding $2$ points, and
so on. We observe that there is a perfect dependence between the $D$-efficiency
and the frequency table of $b-b_Y$.

However, analyzing the objective functions $g_1(Y)$, $g_2(Y)$ and
$g_3(Y)$, we argue that the previous finding has no trivial
explanation. The values of all our objective functions are
displayed in Table \ref{tab:sat_des_2_4}.

\begin{table}  \caption{Classification
of all saturated fractions for the $2^4$ design with main effects
and $2$-way interactions.}
\label{tab:sat_des_2_4}       
%
%
\begin{tabular}{p{2cm}p{2cm}p{2cm}p{2cm}p{2cm}}
\hline\noalign{\smallskip}
$g_1(Y)$ & $g_2(Y)$  & $g_3(Y)$ &  $E_Y$ & $n$  \\
\noalign{\smallskip}\svhline\noalign{\smallskip}
475 & 1,725 & 9 & 68.29 & 192 \\
475 & 1,739 & 10 & 68.29 & 960 \\
475 & 1,753 & 10 & 68.29 & 960 \\
475 & 1,739 & 11 & 68.29 & 80 \\
475 & 1,767 & 11 & 68.29 & 480 \\
475 & 1,781 & 11 & 77.46 & 320 \\
475 & 1,795 & 11 & 83.38 & 16 \\
\noalign{\smallskip}\hline\noalign{\smallskip}
& & & Total & 3,008 \\
\noalign{\smallskip}\hline\noalign{\smallskip}
\end{tabular}
\end{table}

From Table \ref{tab:sat_des_2_4} we observe that both $g_2(Y)$ and
$g_3(Y)$ are increasing as $D$-efficiency increases. Notice also
that $g_1(Y)$ is constant over all the saturated fractions. This
is a general fact for all no-$m$-way interaction models.

\begin{proposition} \label{smallprop}
For a no-$m$-way interaction model, $g_1(Y)$ is constant over all
saturated fractions.
\end{proposition}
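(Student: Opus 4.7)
The plan is to reduce the claim to a statement about the combinatorics of the circuit basis that does not depend on $Y$, and then to use the symmetry of the design to establish that statement. First I would write
\[
g_1(Y) \;=\; \sum_{i=1}^L b_i - \sum_{i=1}^L (b_Y)_i \;=\; B - \mathbf{1}_L^t\, C_A\, Y,
\]
where $B=\sum_{i=1}^L b_i$ depends only on the circuit basis and not on $Y$. Setting $d_j := \sum_{i=1}^L c_{ij}$, the quantity $d_j$ counts how many circuit supports contain the design point indexed by $j$, and
\[
g_1(Y) \;=\; B - \sum_{j=1}^K d_j\, y_j .
\]
So it suffices to show that, for a no-$m$-way interaction model, $d_j$ is the same for every design point $j$; because then, using the saturation constraint (\ref{eq:xspp1}), $\sum_j d_j y_j = d \sum_j y_j = d\,p$, which is independent of the particular saturated fraction chosen.

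To show that the common value $d$ exists, I would invoke the symmetry of the setting. A no-$m$-way interaction model on a symmetric factorial design is invariant under the group $G$ generated by permutations of factors together with permutations of levels within each factor. This group acts transitively on the set of $K$ design points, and its action on design points induces a corresponding permutation of the rows of $A = X^t$ (equivalently, of the columns of $A^t=X$). Because $G$ preserves the model, it preserves the column span of $X$, hence the null space of $X^t$, and therefore it sends circuits of $A$ to circuits of $A$ (possibly up to sign), in particular it permutes the collection of supports $\{\operatorname{supp}(f_1),\dots,\operatorname{supp}(f_L)\}$.

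Consequently, if $\sigma\in G$ maps design point $j$ to design point $j'$, the set of circuit supports containing $j$ is mapped bijectively onto the set of circuit supports containing $j'$, so $d_j = d_{j'}$. Transitivity of $G$ on design points then gives $d_j = d$ for all $j$, whence $d = B/K$ and
\[
g_1(Y) \;=\; B - \tfrac{B}{K}\,p ,
\]
a constant over all saturated fractions.

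The main obstacle is the middle step: one must check carefully that $G$ really permutes the circuits. This reduces to checking that the symmetries act on the columns of $X$ by a monomial transformation (permutation combined with rescaling), which preserves the left null space and the matroid structure, and hence the set of minimal-support kernel vectors. Once this invariance is granted, transitivity on design points gives the conclusion immediately; the identity $d = B/K$ can then be verified numerically as a sanity check against the example of Table~\ref{tab:sat_des_2_4}.
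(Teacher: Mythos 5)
Your argument is correct and follows essentially the same route as the paper's proof: the same decomposition $g_1(Y)=\sum_{i=1}^L b_i-\sum_{j=1}^K Y_j\sum_{i=1}^L c_{ij}$, followed by the symmetry argument that invariance of a no-$m$-way interaction model under factor and level permutations forces every design point to lie in the same number $q$ of circuit supports, giving $\sum_{i=1}^L (b_Y)_i=pq$ (the paper states the symmetry step more briefly, while you spell out why the group permutes the circuits). One minor bookkeeping slip: a permutation of design points permutes the \emph{columns} of $A=X^t$ (the rows of $X$), while the monomial action you invoke lives on the model-term side, i.e.\ the rows of $A$ (columns of $X$); this does not affect the substance of your argument.
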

\begin{proof}
We recall that $C_A=(c_{ij}, i=1,\ldots,L, j=1,\ldots,K)$ is the
$L \times K$ matrix, whose rows contain the values of the
indicator functions of the supports of the circuits $f_1, \ldots,
f_L$, $c_{ij}=(f_{ij} \ne 0) , i=1,\ldots,L, j=1,\ldots,K$. We
have
\[
g_1(Y)= \sum_{i=1}^L (b-b_Y)_i=  \sum_{i=1}^L (b)_i - \sum_{i=1}^L
(b_Y)_i \, .
\]
The first addendum does not depend on $Y$, and for the second one
we get
\[
\sum_{i=1}^L (b_Y)_i= \sum_{i=1}^L \sum_{j=1}^K  c_{ij} Y_j =
\sum_{j=1}^K Y_j \sum_{i=1}^L c_{ij} \, .
\]
Now observe that a no-$m$-way interaction model does not change
when permuting the factors or the levels of the factors.
Therefore, by a symmetry argument, each design point must belong
to the same number $q$ of circuits, and thus $\sum_{i=1}^L
c_{ij}=q$. It follows that
\[
\sum_{i=1}^L (b_Y)_i = q \sum_{j=1}^K Y_j = pq \, .
\]
\qed
\end{proof}

In view of Prop.~\ref{smallprop}, in the remaining examples we
will consider only the functions $g_2$ and $g_3$.

\subsection{Second case. $3 \times 3 \times 4$ with main effects and $2$-way interactions}

Let us consider the $3 \times 3 \times 4$ design and the model
with main factors and 2-way interactions. The model has $p=24$
degrees of freedom. The number of circuits is $17,994$. In this
case the number of possible subsets of the full design is
$\binom{36}{24}=1,251,677,700$. It would be computationally
unfeasible to analyze all the fractions. We use the methodology
described in \cite{fontana_dopt} to obtain a sample of saturated
$D$-optimal designs. It is worth noting that this methodology
finds $D$-optimal designs and not simply saturated designs. This
is particularly useful in our case because allows us to study
fractions for which the $D$-efficiency is very high. The sample
contains $500$ designs, $380$ different.

The results are summarized in Table \ref{tab:sat_des_3_3_4}, where
the fractions with minimum $D$-efficiency $E_Y$ have been
collapsed in a unique row in order to save space. We observe that
for $138$ different designs the maximum value of $D$-efficiency,
$E_Y=24.41$ is obtained for both $g_2(Y)$ and $g_3(Y)$ at their
maximum values $g_2(Y)=970,896$ and $g_3(Y)=24$.

\begin{table}
\caption{Classification of $380$ random saturated fractions for
the $3 \times 3 \times 4$ design with main effects and $2$-way
interactions.}
\label{tab:sat_des_3_3_4}       
%
%
\begin{tabular}{p{2cm}p{2cm}p{2cm}p{2cm}}
\hline\noalign{\smallskip}
$g_2(Y)$  & $g_3(Y)$ &  $E_Y$ & $n$  \\
\noalign{\smallskip}\svhline\noalign{\smallskip}
$\leq$963,008 & $\leq$21 & 22.27 & 37 \\
962,816 & 21 & 23.6 & 7 \\
962,816 & 22 & 23.6 & 12 \\
963,700 & 22 & 23.6 & 34 \\
965,308 & 22 & 23.6 & 46 \\
966,760 & 22 & 23.6 & 9 \\
967,676 & 22 & 23.6 & 6 \\
970,860 & 24 & 23.6 & 91 \\
970,896 & 24 & 24.41 & 138 \\
\noalign{\smallskip}\hline\noalign{\smallskip}
& & Total & 380 \\
\noalign{\smallskip}\hline\noalign{\smallskip}
\end{tabular}
\end{table}

\subsection{Third case. $2^5$ with main effects}
Let us consider the $2^5$ design and the model with main effects
only. The model has $p=6$ degrees of freedom. The number of
circuits is $353,616$. As in the previous case we use the
methodology described in \cite{fontana_dopt} to get a sample of
$500$ designs, $414$ different.

The results are summarized in Table \ref{tab:sat_des_2_5}. We
observe that for $194$ different designs, the maximum value of
$D$-efficiency, $E_Y=90.48$ is obtained for both $g_2(Y)$ and
$g_3(Y)$ at their maximum values $g_2(Y)=11,375,490$ and $g_3(Y)=6$.

%
\begin{table}
\caption{Classification of $414$ random saturated fractions for
the $2^5$ design with main effects.}
\label{tab:sat_des_2_5}       
%
%
\begin{tabular}{p{2cm}p{2cm}p{2cm}p{2cm}}
\hline\noalign{\smallskip}
$g_2(Y)$  & $g_3(Y)$ &  $E_Y$ & $n$  \\
\noalign{\smallskip}\svhline\noalign{\smallskip}
11,360,866 & 6 & 76.31 & 31 \\
11,342,586 & 6 & 83.99 & 9 \\
11,371,834 & 6 & 83.99 & 126 \\
11,375,490 & 5 & 83.99 & 54 \\
11,375,490 & 6 & 90.48 & 194 \\
\noalign{\smallskip}\hline\noalign{\smallskip}
& & Total & 414 \\
\noalign{\smallskip}\hline\noalign{\smallskip}
\end{tabular}
\end{table}


\section{Concluding remarks}
\label{sec:conclusion}

The examples discussed in the previous section show that the
$D$-efficiency of the saturated fractions and the new objective
functions based on combinatorial objects are strongly dependent.
The three examples suggest to investigate such connection in a
more general framework, in order to characterize saturated
$D$-optimal fractions in terms of their geometric structure.
Notice that our presentation is limited to saturated fractions,
but it would be interesting to extend the analysis to other kinds
of fractions. Moreover, we need to investigate the connections
between the new objective functions and other criteria than
$D$-efficiency.

Since the number of circuits dramatically increases with the
dimensions of the factorial design, both theoretical tools and
simulation will be essential for the study of large designs.

\bibliographystyle{spmpsci}
\bibliography{referenc_ROBERTOFONTANA}

\end{document}